\documentclass[11pt,reqno]{amsart}
\usepackage{amscd,amssymb,amsmath,amsthm}
\usepackage[arrow,matrix]{xy}
\usepackage{graphicx}
\usepackage{cite}
\usepackage{geometry}
\tolerance=6000 \textwidth=16cm
\topmargin=-1cm
\oddsidemargin=0.4cm
 \evensidemargin=-1cm
 \textheight=24cm
 \parindent=1truecm
\geometry{left=2.9cm} \geometry{right=2.4cm} \geometry{top=2.8cm}
\geometry{bottom=3.5cm}
\newtheorem{thm}[subsection]{Theorem}
\newtheorem{lemma}[subsection]{Lemma}
\newtheorem{pro}[subsection]{Proposition}

\newtheorem{rk}[subsection]{Remark}
\newtheorem{defn}[subsection]{Definition}

\numberwithin{equation}{section} \setcounter{tocdepth}{1}

\newcommand{\s}{{\sigma}}
\newcommand{\de}{{\xi}}

\def \s {\sigma}

\newcommand{\bea}{\begin{eqnarray}}
\newcommand{\eea}{\end{eqnarray}}

%%%%%%%%%%%%%%%%%%%%%%%%%%%%%%%%%%%%%%%%%%%%%%%%%%%%%%%%%

%%%%%%%%%%%%%%%%%%%%%%%%%%%%%%%%%%%%%%%%%%%%%%%%%%%%%%

%\DeclareMathOperator{\Hom}{Hom} \DeclareMathOperator{\rank}{rank}
%\DeclareMathOperator{\im}{im} \DeclareMathOperator{\coker}{coker}
%\DeclareMathOperator{\deg}{deg}
%\DeclareMathOperator{\id}{id} \DeclareMathOperator{\mult}{mult}
%\DeclareMathOperator{\supp}{supp}
%\DeclareMathOperator{\genus}{genus}
%\DeclareMathOperator{\min}{min}

%\doublespacing

%%%%%%%%%%%%%%%%%%%%%%%%%%%%%%

\begin{document}

\title[]{Phase transition for models with continuum set of spin values on Bethe lattice}

\author{Yu. Kh. Eshkabilov, G. I. Botirov, F. H. Haydarov}

\address{Yu.\ Kh.\ Eshkabilov\\ National University of Uzbekistan,
Tashkent, Uzbekistan.} \email {yusup62@mail.ru}

\address{G.\ I.\ Botirov\\ Institute of mathematics,
29, Do'rmon Yo'li str., 100125, Tashkent, Uzbekistan.} \email
{botirovg@yandex.ru}

\address{F.\ H.\ Haydarov\\ National University of Uzbekistan,
Tashkent, Uzbekistan.} \email {haydarov\_imc@mail.ru}

\begin{abstract} In this paper we consider models with nearest-neighbor interactions and with the set [0,1] of spin values,
on a Bethe lattice (Cayley tree) of an arbitrary order. These models depend on parameter $\theta$. We describe all of
Gibbs measures in any right parameter $\theta$ corresponding to the models.
\end{abstract}
\maketitle

{\bf Mathematics Subject Classifications (2010).} 82B05, 82B20
(primary); 60K35 (secondary)

{\bf{Key words.}} Cayley tree, spin value, Gibbs measures,
Hammerstein's equation, fixed point.

\section{Introduction} \label{sec:intro}

 Spin models on a graph or in a continuous spaces form a large
class of systems considered in statistical mechanics. Some of them
have a real physical meaning, others have been proposed as
suitably simplified models of more complicated systems. The
geometric structure of the graph or a physical space plays an
important role in such investigations. For example, in order to
study the phase transition problem on a cubic lattice $Z^d$ or in
space one uses, essentially, the Pirogov- Sinai theory; see
\cite{PS1} and \cite{PS2}. A general methodology of phase
transitions in $Z^d$ or $R^d$ was developed in \cite{7}. On the
other hand, on a Cayley tree of order $k$ one uses the theory of
Markov splitting random fields based upon the corresponding
recurrent equations. In particular, in Refs \cite{1}, \cite{3},
\cite{11} and \cite{13} Gibbs measures on $\Gamma_k$ have been
described in terms of solutions to the recurrent equations.

During last five years, an increasing attention was given to
models with a \emph{continuum} set of spin values on a Cayley
tree. Until now, one considered nearest-neighbor interactions
$(J_{3}=J=\alpha=0,\ J_{1}\neq 0)$ with the set of spin values
$[0,1]$. The following results was achieved: splitting Gibbs
measures on a Cayley tree of order $k$ are described by solutions
to a nonlinear integral equation. For $k = 1$ (when the Cayley
tree becomes a one-dimensional lattice $\mathbb{Z}^1$) it has been
shown that the integral equation has a unique solution, implying
that there is a unique Gibbs measure. For a general $k$, a
sufficient condition was found under which a periodic splitting
Gibbs measure is unique (see \cite{ehr2013}, \cite{enh2015},
\cite{rh2015} and \cite{re}).

In \cite{ehr2012} on a Cayley tree $\Gamma_{k}$ of order $k \geq
2$, phase transitions were proven to exist i.e., it was given
examples of Hamiltonian of model which there exists phase
transitions. Afterwards, in \cite{new1} it was generalized the
examples on $\Gamma_{2}$. There are some examples of models with continuum set of spin values which there exists a phase transition on a Cayley tree of some order (see  \cite{t}, \cite{la}, \cite{ehr2012}, \cite{new1}).
In \cite{p} it was considered a model with nearest-neighbor interactions and with the set [0,1] of spin values,
on a Cayley tree of order two. This model depends on two parameters $n\in \mathbb N$ and $\theta\in [0,1)$.  Author proved that if $ 0 \leq \theta \leq \frac{2n+3}{2(2n+1)}$, then for the model there exists a unique translational-invariant Gibbs measure;
If $\frac{2n+3}{2(2n+1)} < \theta <1$, then there are three translational-invariant Gibbs measures (i.e. phase transition occurs).

 In this paper we consider models which
include all of examples in \cite{p}, \cite{ehr2012}, \cite{new1} on a Cayley
tree of an arbitrary order. Also we describe all of Gibbs measures
corresponding to the models.

\section{Preliminaries}
Denote that on the bottom definitions and known results are given
short. The reader can read detail  in \cite{re}.

  A Cayley tree $\Gamma_k=(V,L)$ of order $k\in \mathbb{N}$
is an infinite homogeneous tree, i.e., a graph without cycles,
with exactly $k+1$ edges incident to each vertices. Here $V$ is
the set of vertices and $L$ that of edges (arcs). Two vertices $x$
and $y$ are called nearest neighbors if there exists an edge $l\in
L$ connecting them. We will use the notation $l=\langle
x,y\rangle$. The distance $d(x,y), x,y \in V$, on the Cayley tree
is defined by the formula

$$d(x,y)=\min\{d |\ x=x_{0},x_{1},...,x_{d-1},x_{d}=y\in V \ \emph{such that the pairs}$$
$$\langle x_{0},x_{1}\rangle,...,\langle x_{d-1},x_{d}\rangle \emph{are neighboring vertices}\}.$$\vskip
0.3 truecm

Let $x^{0}\in V$ be fixed and set

$$W_{n}=\{x\in V\ |\ d(x,x^{0})=n\}, \,\,\,\,\ V_{n}=\{x\in V\ |\ d(x,x^{0})\leq n\},$$

$$L_{n}=\{l=\langle x,y\rangle\in L\ |\ x,y \in V_{n}\}.$$\vskip
0.3 truecm
 The set of the direct successors of $x$ is denoted by $S(x),$
 i.e.
 $$S(x)=\{y\in W_{n+1}|\ d(x,y)=1\}, \ x\in W_{n}.$$
 We observe that for any vertex $x\neq x^{0},\ x$ has $k$ direct
 successors and $x^{0}$ has $k+1$. Vertices $x$ and $y$ are called second neighbors, which fact is marked as
 $\rangle x,y\langle,$ if there exist a vertex $z\in V$ such that
 $x$, $z$ and $y$, $z$ are nearest neighbors. We will consider only second neighbors $\rangle x, y \langle,$ for which there
exist $n$ such that $x, y \in W_n$. Three vertices $x,\ y$ and $z$
are called a triple of neighbors in which case we write $\langle
x, y, z\rangle,$ if $\langle x, y \rangle,\ \langle y, z \rangle$
are nearest neighbors and $x,\ z \in
W_n,\ y \in W_{n-1}$, for some $n \in \mathbb{N}$.\\

Consider models where the spin takes values in the set $[0,1]$, and is assigned to the vertexes of the tree. For $A \subset V$ a configuration $\sigma_A$ on $A$ is an arbitrary function $\sigma_A:A\mapsto [0,1].$ Denote $\Omega_A=[0,1]^A$ the set of all configurations on $A$. A configuration $\sigma$ on $V$ is then defined as a function $x \in V \mapsto \sigma(x) \in [0,1]$; the set of all configurations is $[0,1]^V$.

The (formal) Hamiltonian of the model is:
\begin{equation}\label{m}
H(\sigma)=-J \sum \limits_{<x,y> \in L}\xi_{\sigma(x),\sigma(y)},
\end{equation} where $J \in R\setminus \{0\}$ and $\xi : (u,v) \in
[0,1]^2 \mapsto \xi_{u,v} \in R$ is a given bounded, measurable
function.

Let $h:\;x\in V\mapsto h_x=(h_{t,x}, t\in [0,1]) \in R^{[0,1]}$ be
mapping of $x\in V\setminus \{x^0\}$.  Given $n=1,2,\ldots$,
consider the probability distribution $\mu^{(n)}$ on
$\Omega_{V_n}$ defined by
\begin{equation}\label{e2}
\mu^{(n)}(\sigma_n)=Z_n^{-1}\exp\left(-\beta H(\sigma_n)
+\sum_{x\in W_n}h_{\sigma(x),x}\right),
\end{equation}
 Here, as before, $\sigma_n:x\in V_n\mapsto
\sigma(x)$ and $Z_n$ is the corresponding partition function:
\begin{equation}\label{e3} Z_n=\int_{\Omega_{V_n}}
\exp\left(-\beta H({\widetilde\sigma}_n) +\sum_{x\in
W_n}h_{{\widetilde\sigma}(x),x}\right)
\lambda_{V_n}({d\widetilde\s_n}).
\end{equation}

The probability distributions $\mu^{(n)}$ are compatible if for
any $n\geq 1$ and $\sigma_{n-1}\in\Omega_{V_{n-1}}$:
\begin{equation}\label{e4}
\int_{\Omega_{W_n}}\mu^{(n)}(\sigma_{n-1}\vee\omega_n)\lambda_{W_n}(d(\omega_n))=
\mu^{(n-1)}(\sigma_{n-1}).
\end{equation} Here
$\sigma_{n-1}\vee\omega_n\in\Omega_{V_n}$ is the concatenation of
$\sigma_{n-1}$ and $\omega_n$. In this case there exists a unique
measure $\mu$ on $\Omega_V$ such that, for any $n$ and
$\sigma_n\in\Omega_{V_n}$, $\mu \left(\left\{\sigma
\Big|_{V_n}=\sigma_n\right\}\right)=\mu^{(n)}(\sigma_n)$.

\begin{defn} The measure $\mu$ is called {\it splitting
Gibbs measure} corresponding to Hamiltonian (\ref{m}) and
function $x\mapsto h_x$, $x\neq x^0$. \end{defn}

 The following
statement describes conditions on $h_x$ guaranteeing compatibility
of the corresponding distributions $\mu^{(n)}(\sigma_n).$

 \begin{pro}\label{p1}\cite{re} {\it The probability distributions
$\mu^{(n)}(\sigma_n)$, $n=1,2,\ldots$, in} (\ref{e2}) {\sl are
compatible iff for any $x\in V\setminus\{x^0\}$ the following
equation holds:
\begin{equation}\label{e5}
 f(t,x)=\prod_{y\in S(x)}{\int_0^1\exp(J\beta\de_{tu})f(u,y)du \over \int_0^1\exp(J\beta{\de_{0u}})f(u,y)du}.
 \end{equation}
Here, and below  $f(t,x)=\exp(h_{t,x}-h_{0,x}), \ t\in [0,1]$ and
$du=\lambda(du)$ is the Lebesgue measure.}
\end{pro}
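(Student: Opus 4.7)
The plan is to derive the functional equation \eqref{e5} directly from the Kolmogorov-type compatibility condition \eqref{e4} by a standard ``peeling off one layer'' argument, and then reverse the steps for the converse direction.

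First I would split the Hamiltonian according to the layer of $W_n$ that is being integrated out. Every edge in $L_n\setminus L_{n-1}$ connects a vertex $x\in W_{n-1}$ to a vertex $y\in S(x)\subset W_n$, so
$$
H(\sigma_{n-1}\vee\omega_n)=H(\sigma_{n-1})-J\sum_{x\in W_{n-1}}\sum_{y\in S(x)}\xi_{\sigma(x),\,\omega(y)}.
$$
Inserting this into \eqref{e2} and using that the sets $S(x)$ for distinct $x\in W_{n-1}$ are disjoint (the tree has no cycles), the integral over $\Omega_{W_n}$ in \eqref{e4} factorises as a product of single-variable integrals. After dividing through by $\mu^{(n-1)}(\sigma_{n-1})$ and using the definitions of $Z_n$ and $Z_{n-1}$, the compatibility condition \eqref{e4} becomes
$$
\frac{Z_{n-1}}{Z_n}\prod_{x\in W_{n-1}}\prod_{y\in S(x)}\int_0^1\exp\bigl(J\beta\xi_{\sigma(x),u}+h_{u,y}\bigr)\,du \;=\;\exp\!\Bigl(\sum_{x\in W_{n-1}}h_{\sigma(x),x}\Bigr),
$$
and this has to hold for every choice of $\sigma_{n-1}$.

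Since both sides factor as a product over $x\in W_{n-1}$, and since for each $x$ only the value $\sigma(x)$ appears on both sides, varying $\sigma(y)$ for $y\neq x$ shows that, for each $x\in W_{n-1}$, there is a constant $a(x)$ (independent of $\sigma(x)$) such that
$$
a(x)\prod_{y\in S(x)}\int_0^1\exp\bigl(J\beta\xi_{\sigma(x),u}+h_{u,y}\bigr)\,du=\exp(h_{\sigma(x),x}).
$$
To eliminate the unknown constant $a(x)$ I would evaluate this identity at $\sigma(x)=0$ and take the quotient, obtaining
$$
\prod_{y\in S(x)}\frac{\int_0^1\exp(J\beta\xi_{t,u})\exp(h_{u,y})\,du}{\int_0^1\exp(J\beta\xi_{0,u})\exp(h_{u,y})\,du}=\exp(h_{t,x}-h_{0,x})=f(t,x).
$$
Finally, factoring $\exp(h_{0,y})$ out of the numerator and denominator (where it cancels) and using $f(u,y)=\exp(h_{u,y}-h_{0,y})$ converts this into \eqref{e5}. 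For the converse direction, given \eqref{e5} one defines $a(x)$ by setting $t=0$, runs the chain of equalities backwards, and verifies that the resulting normalisation matches $Z_{n-1}/Z_n$, so \eqref{e4} holds.

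The main obstacle is the bookkeeping in passing between the two-variable function $h_{t,x}$ and the ratio $f(t,x)$: one has to make sure that the $x$-dependent constants $a(x)$ and the partition-function ratio $Z_{n-1}/Z_n$ are absorbed consistently. This is handled cleanly by the normalisation $\sigma(x)=0$, which is precisely the reason the equation is expressed in terms of $f(t,x)=\exp(h_{t,x}-h_{0,x})$ rather than directly in terms of $h_{t,x}$.
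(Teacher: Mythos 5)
Your argument is correct and is essentially the standard proof of this proposition as given in the cited reference \cite{re}: split off the edges between $W_{n-1}$ and $W_n$, factorize the integral over $\Omega_{W_n}$ using the disjointness of the sets $S(x)$, separate variables to get a per-vertex identity up to a constant, and normalize at $\sigma(x)=0$ to eliminate that constant and produce $f(t,x)$. The paper itself does not reproduce the proof, but your derivation (including the reverse direction, where the constants $a(x)$ multiply up to $Z_n/Z_{n-1}$) matches the argument of \cite{re} step for step.
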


\section{Main results}

Let
$$C^+[0,1]=\{f\in C[0,1]: f(x)\geq 0\}.$$

For every $k\in\mathbb{N}$ we consider an integral operator
$H_{k}$ acting in the cone $C^{+}[0,1]$ as
\begin{equation}\label{o}(H_{k}f)(t)=\int^{1}_{0}K(t,u)f^{k}(u)du, \,\ k\in\mathbb{N}.\end{equation}

The operator $H_{k}$ is called Hammerstein's integral operator of
order $k$. This operator is well known to generate ill-posed
problems. Clearly, if $k\geq2$ then $H_{k}$ is a nonlinear
operator.

It is known that the set of translational invariant Gibbs measures
of the model (\ref{m}) is described by the fixed points of the
Hammerstein's operator (see \cite{ehr2013}).

For $k \geq 2$ in the model (\ref{m}) and
$$\xi_{t,u}=\xi_{t,u}(\theta,\beta)=\frac{1}{J \beta}\ln\left(1+\theta \sqrt[2n+1]{4(t-\frac{1}{2})(u-\frac{1}{2})}\right), \ t,u \in [0,1]$$
where $ -4^{\frac{1}{2n+1}}< \theta <4^{\frac{1}{2n+1}}$. The for
the Kernel $K(t,u)$ of the Hammerstein's operator $H_k$ we have
\begin{equation}\label{k}K(t,u)=1+\theta \sqrt[2n+1]{4(t-\frac{1}{2})(u-\frac{1}{2})}.\end{equation}

Let $t-\frac{1}{2}=x$ and $u-\frac{1}{2}=y$, we get
\begin{equation}K(x,y)=1+\theta\sqrt[2n+1]{xy}, \ x, y\in [-0.5,0.5].\end{equation}

We defined the operator $V_2:(x,y) \in R^2 \rightarrow
(x',y') \in R^2$ by
\begin{equation}\label{V}
\begin{split} V_k: \left\{
\begin{array}{lllllll}
x'=\frac{(2n+1)!\cdot k!}{2}\sum \limits_{i=0}^{2n+1}\frac{(-1)^i}{(2n-i)!(k+1+i)!}\cdot \frac{(x+\sqrt[2n+1]{2}\theta y)^{k+1+i}-(-1)^i(x-\sqrt[2n+1]{2}\theta y)^{k+1+i}}{\sqrt[2n+1]{2^i}\theta^i y^i}\\ [6 mm]
y'=\frac{(2n+1)^2(2n)!k!}{2 \sqrt[2n+1]{2}}\sum \limits_{i=0}^{2n+2}\frac{(-1)^i}{(2n+1-i)!(k+1+i)!}\cdot \frac{(x+\sqrt[2n+1]{2}\theta y)^{k+1+i}+(-1)^i(x-\sqrt[2n+1]{2}\theta y)^{k+1+i}}{\sqrt[2n+1]{2^i}\theta^i y^i};
\end{array}\right.
\end{split}
\end{equation}

\begin{lemma} \cite{t}.A function $\varphi \in C[0,1]$ is a solution of the Hammerstein's equation
\begin{equation}
\begin{split}
(H_kf)(t)=f(t)\end{split}
\end{equation}
iff $\varphi(t)$ has the following form
$$\varphi(t)=C_1 + C_2
\theta \sqrt[2n+1]{4(t-\frac{1}{2})},$$
where $(C_1, C_2) \in R^2$ is
a fixed point of the operator $V_k$ (\ref{V}).
\end{lemma}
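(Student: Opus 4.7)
The plan is to exploit the fact that the kernel has finite rank: since $2n+1$ is odd, $\sqrt[2n+1]{\cdot}$ is defined on all of $\R$ and is multiplicative, so
\begin{equation*}
K(t,u) = 1 + \theta\,\sqrt[2n+1]{4(t-\tfrac12)}\cdot\sqrt[2n+1]{u-\tfrac12}.
\end{equation*}
For the necessity direction, substituting this into the Hammerstein's equation and splitting the integral gives
\begin{equation*}
\varphi(t) = \int_0^1 \varphi^k(u)\,du + \theta\,\sqrt[2n+1]{4(t-\tfrac12)}\int_0^1 \sqrt[2n+1]{u-\tfrac12}\,\varphi^k(u)\,du,
\end{equation*}
so any fixed point of $H_k$ is automatically of the form $C_1 + C_2\theta\sqrt[2n+1]{4(t-1/2)}$, where $C_1$ and $C_2$ are the two integrals on the right.

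For the sufficiency direction, plug this ansatz back in. The Hammerstein's equation is then equivalent to the scalar system
\begin{align*}
C_1 &= \int_0^1 \bigl(C_1 + C_2\theta\sqrt[2n+1]{4(u-\tfrac12)}\bigr)^k du,\\
C_2 &= \int_0^1 \sqrt[2n+1]{u-\tfrac12}\,\bigl(C_1 + C_2\theta\sqrt[2n+1]{4(u-\tfrac12)}\bigr)^k du,
\end{align*}
and the remaining task is to show that this system is exactly the fixed-point condition $V_k(C_1,C_2)=(C_1,C_2)$. I would make the substitution $s = \sqrt[2n+1]{2u-1}$, so that $u=(1+s^{2n+1})/2$ and $du=\tfrac{2n+1}{2}s^{2n}ds$ with $s\in[-1,1]$. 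Setting $a = C_1$ and $b = \sqrt[2n+1]{2}\,\theta C_2$, this rewrites the system in terms of the moment integrals $\int_{-1}^1 s^m(a+bs)^k\,ds$ with $m=2n$ (giving the equation for $C_1$) and $m=2n+1$ (giving the equation for $C_2$, with an extra prefactor of $1/\sqrt[2n+1]{2}$ coming from $\sqrt[2n+1]{u-1/2}=s/\sqrt[2n+1]{2}$).

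Each such moment integral can be computed in closed form by repeated integration by parts: each step lowers $m$ by one and raises the exponent on $(a+bs)$ by one, producing boundary terms of the shape $(a+b)^{k+i+1}\pm(-1)^{m-i}(a-b)^{k+i+1}$. Carrying out the iteration and plugging back $a=C_1$, $b=\sqrt[2n+1]{2}\,\theta C_2$ yields precisely the two components of $V_k$ after a trivial re-indexing of the summation. The main obstacle will be the bookkeeping in this last step — tracking signs, binomial coefficients, and the various factors of $\sqrt[2n+1]{2}$ in order to line up with the somewhat baroque definition of $V_k$ — but once the matching is done, both directions of the equivalence follow immediately from the steps above.
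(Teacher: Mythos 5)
The paper does not actually prove this lemma: it is quoted from \cite{t} and no argument is given here, so there is nothing to compare line by line. Your proposal is the natural (and essentially the only reasonable) argument, and it is correct in outline: because $2n+1$ is odd the kernel factors as $K(t,u)=1+\theta\,\sqrt[2n+1]{4(t-\tfrac12)}\cdot\sqrt[2n+1]{u-\tfrac12}$, i.e.\ it is degenerate of rank two, so every fixed point of $H_k$ lies in $\operatorname{span}\{1,\sqrt[2n+1]{4(t-\tfrac12)}\}$ with coefficients given by the two moment integrals; conversely, since these two functions are linearly independent on $[0,1]$, the Hammerstein equation for the ansatz is equivalent to the two scalar equations you write, and the substitution $s=\sqrt[2n+1]{2u-1}$ reduces them to $\int_{-1}^{1}s^{m}(a+bs)^{k}\,ds$ with $m=2n,\,2n+1$, evaluated by iterated integration by parts. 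One concrete warning about the step you defer as ``bookkeeping'': if you carry out the reduction $\int_{-1}^{1}s^{m}(a+bs)^{p}\,ds=\frac{1}{b(p+1)}\bigl[s^{m}(a+bs)^{p+1}\bigr]_{-1}^{1}-\frac{m}{b(p+1)}\int_{-1}^{1}s^{m-1}(a+bs)^{p+1}\,ds$, the $i$-th boundary term acquires the factor $b^{-(i+1)}$, not $b^{-i}$; testing the printed formula (\ref{V}) at $n=0$, $k=1$ gives $x'=2\theta xy$ where the correct value of $\int_0^1\varphi^k\,du$ is $x$. So the denominator in (\ref{V}) should read $\sqrt[2n+1]{2^{i+1}}\,\theta^{i+1}y^{i+1}$ (and the nominal upper limits $2n+1$, $2n+2$ contribute nothing because of the $(2n-i)!$ and $(2n+1-i)!$ factors). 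Your derivation will therefore not match the operator exactly as printed; you should either note the typo or verify your closed form independently on a low-degree case as above. Apart from this, the argument is sound.
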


 For $k= 2s, \ s \in \mathbb{N}$, we denote following notations $$\alpha_{2i}=\frac{2n+1}{2n+2i+1}\left(\frac{1}{2}\right)^{\frac{2i}{2n+1}}, \ \ \ \beta_{i}=C_{2s}^{i} \alpha_{i+1}, \ i, n \in \mathbb{N};$$
$$\theta_{2i+1}=\frac{(2i+1)(2n+2i+3)4^{\frac{1}{2n+1}}}{(2s-2i)(2n+2i+1)}.$$

For $k= 2s+1, \ s \in \mathbb{N}$, we denote following notations $$\alpha_{2i}=\frac{2n+1}{2n+2i+1}\left(\frac{1}{2}\right)^{\frac{2i}{2n+1}}, \ \ \ \beta_{i}=C_{2s+1}^{i+1} \alpha_{i+1}, \ i, n \in \mathbb{N};$$
$$\theta_{2i+1}=\frac{(2i+1)(2s+2i+3)4^{\frac{1}{2n+1}}}{(2s-2i+1)(2n+2i+1)}.$$

  \begin{rk} We consider the following function $\theta_{x}=\frac{x^2+(2n+2)x}{(2s+1-x)(x+2n)}$.
  From $\theta^{'}_{x}=\frac{2x^{2}+(2s+2)(x+2n)^2+4n(2s+1)}{(2s+1-x)^{2}(x+2n)^{2}}>0$ we can conclude that
\begin{equation}\label{t}\theta_{1}<\theta_{3}<\theta_{5}<...<\theta_{2s-1}.\end{equation}\end{rk}

\begin{lemma} Let $k=2s,\ s\in \textbf{N}$. If the point $\gamma (x_0,y_0) \in R^+_2$ is a fixed point of (\ref{V}), then $\gamma \in \mathbb{R^>_2} $ and $\lambda=\frac{y}{x}$ is a root of the following equation
\begin{equation}\label{p} P(\lambda):=\beta_{1}(\theta_1-\theta)+\beta_{3}(\theta_{3}-\theta)\lambda^{2}+
...+\beta_{2s-1}(\theta_{2s-1}-\theta)\lambda^{2s-2}+\alpha_{2s}\lambda^{2s}=0.
\end{equation}
 \end{lemma}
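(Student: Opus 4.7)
The plan is to expand the fixed-point system $V_k(x,y)=(x,y)$ into polynomial form in $(x,y)$, substitute $y=\lambda x$, and eliminate $x$ to obtain a single polynomial identity in $\lambda$ matching $P$. First, going back to the Hammerstein integrals that produced \eqref{V} (equivalently expanding the binomials in \eqref{V} and observing that the negative powers of $y$ cancel pairwise across the summands), the system $x=x'$, $y=y'$ for $k=2s$ reduces to
\begin{align*}
x &=\sum_{j=0}^{s}\binom{2s}{2j}\alpha_{2j}\,x^{2s-2j}y^{2j},\\
y &=\theta\sum_{j=0}^{s-1}\binom{2s}{2j+1}\alpha_{2j+2}\,x^{2s-2j-1}y^{2j+1},
\end{align*}
where $\alpha_{2j}=\int_{-1/2}^{1/2}v^{2j/(2n+1)}\,dv$.

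For the positivity claim, if $x_0=0$ only the $j=s$ summand survives in the first equation, forcing $\alpha_{2s}y_0^{2s}=0$, hence $y_0=0$, a contradiction; and $y_0=0$ with $x_0>0$ would give only the trivial constant Hammerstein solution $\varphi\equiv 1$ (excluded by nontriviality). Hence $\gamma\in\mathbb{R}^>_2$ and $\lambda:=y_0/x_0>0$. Substituting $y=\lambda x$ and dividing the first equation by $x_0$ and the second by $y_0=\lambda x_0$ yields
\[
x_0^{2s-1}\sum_{j=0}^{s}\binom{2s}{2j}\alpha_{2j}\lambda^{2j}=1=x_0^{2s-1}\,\theta\sum_{j=0}^{s-1}\binom{2s}{2j+1}\alpha_{2j+2}\lambda^{2j}.
\]
Equating the two expressions for $x_0^{2s-1}$ eliminates $x_0$ and produces
\[
\sum_{j=0}^{s}\binom{2s}{2j}\alpha_{2j}\lambda^{2j}-\theta\sum_{j=0}^{s-1}\binom{2s}{2j+1}\alpha_{2j+2}\lambda^{2j}=0.
\]

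Finally, one reads off coefficients. For $1\le j\le s-1$ the coefficient of $\lambda^{2j}$ factors as $\beta_{2j+1}(\theta_{2j+1}-\theta)$ via the arithmetic identity
\[
\frac{\binom{2s}{2j}\alpha_{2j}}{\binom{2s}{2j+1}\alpha_{2j+2}}=\frac{2j+1}{2s-2j}\cdot\frac{2n+2j+3}{2n+2j+1}\cdot 4^{1/(2n+1)}=\theta_{2j+1},
\]
together with $\beta_{2j+1}=\binom{2s}{2j+1}\alpha_{2j+2}$. The constant term $1-\theta\beta_1$ becomes $\beta_1(\theta_1-\theta)$ after verifying the boundary identity $\beta_1\theta_1=1$, which reduces to the cancellation $2^{-2/(2n+1)}\cdot 4^{1/(2n+1)}=1$; and the leading coefficient of $\lambda^{2s}$ is simply $\alpha_{2s}$. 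Altogether this is exactly $P(\lambda)=0$. The chief obstacle is this arithmetic bookkeeping — verifying the two identities relating $\binom{2s}{2j}\alpha_{2j}/\binom{2s}{2j+1}\alpha_{2j+2}$ to $\theta_{2j+1}$ and $\beta_1\theta_1$ to $1$; everything else (substitution, elimination of $x_0$, and ruling out boundary fixed points) is routine.
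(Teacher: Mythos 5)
Your proposal is correct and follows essentially the same route as the paper's own proof: at a fixed point of (\ref{V}) one eliminates $x_0$ (equivalently divides the second equation by the first), obtains an even polynomial in $\lambda=y/x$, and identifies its coefficients as $\beta_{2j+1}(\theta_{2j+1}-\theta)$ via the ratio $\binom{2s}{2j}\alpha_{2j}/\bigl(\binom{2s}{2j+1}\alpha_{2j+2}\bigr)=\theta_{2j+1}$ and $\alpha_0=\beta_1\theta_1=1$. You are in fact slightly more explicit than the paper about these coefficient identities and about excluding the degenerate fixed points $(0,0)$ and $(1,0)$ (the latter being the $\lambda=0$ case the paper sets aside).
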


\begin{proof}  Let $(x_0, y_0)$ is a fixed point of (\ref{V}). Now, we divide the second part to first part of system (\ref{V}) then we get following \begin{equation}\frac{\lambda}{\theta}=\frac{C_{2s}^{1}
\alpha_{2}\lambda+C_{2s}^{2} \alpha_{3}\lambda^{2}+...+C_{2s}^{2s}
\alpha_{2s+1}\lambda^{2s}}{1+C_{2s}^{1}
\alpha_{1}\lambda+C_{2s}^{2} \alpha_{2}\lambda^{2}+...+C_{2s}^{2s}
\alpha_{2s}\lambda^{2s}} \end{equation} where  $\lambda=\frac{y}{x}$.

Let $\lambda \neq 0$. After some abbreviations we get

\begin{equation}1-C_{2s}^{1}
\alpha_{2}\theta+(C_{2s}^{2} \alpha_{2}-C_{2s}^{3}
\alpha_{4}\theta)\lambda^{2}+...+(C_{2s}^{2s-2}
\alpha_{2s-2}-C_{2s}^{2s-1}
\alpha_{2s}\theta)\lambda^{2s-2}+C_{2s}^{2s}
\alpha_{2s}\lambda^{2s}=0.\end{equation} Namely,
\begin{equation}\beta_{1}(\theta_1-\theta)+\beta_{3}(\theta_{3}-\theta)\lambda^{2}+
...+\beta_{2s-1}(\theta_{2s-1}-\theta)\lambda^{2s-2}+\alpha_{2s}\lambda^{2s}=0,\end{equation}
where $\theta_{2i-1}=\frac{C_{2s}^{2i-2}
\alpha_{2i-2}}{C_{2s}^{2i-1} \alpha_{2i}}$. It is easy to see if
$\lambda=0$ then this solution corresponding to solution $(1,0)$
of (\ref{V}).\end{proof}

Analogously, we get the following Lemma
\begin{lemma} Let $k=2s+1,\ s\in \textbf{N}.$ If the point $\gamma (x_0,y_0) \in R^+_2$ is a fixed point of (\ref{V}), then $\gamma \in \mathbb{R^>_2} $ and $\lambda=\frac{y}{x}$ is a root of the following equation
\begin{equation}\label{p1} Q(\lambda):=\beta_{1}(\theta_1-\theta)+\beta_{3}(\theta_{3}-\theta)\lambda^{2}+
...+\beta_{2s-1}(\theta_{2s-1}-\theta)\lambda^{2s-2}+\beta_{2s+1}(\theta_{2s+1}-\theta)\lambda^{2s}=0.
\end{equation}
 \end{lemma}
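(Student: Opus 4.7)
The plan is to mimic the proof of the preceding lemma, the author's phrase ``Analogously'' being a clear hint. Starting from a fixed point $(x_0,y_0)\in R_2^+$ of $V_{2s+1}$, I would form the ratio $\lambda=y_0/x_0$ (after ruling out $x_0=0$), divide the two components of (\ref{V}) to eliminate the common prefactors, and identify $\lambda$ as a root of $Q(\lambda)$ by collecting powers of $\lambda$ and matching the coefficients against the announced $\beta_i$, $\theta_{2i+1}$ notation.

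First I would handle the degenerate cases. If $x_0=0$, the $x'$-component of $V_{2s+1}$ forces $y_0=0$, contradicting nontriviality, so $\gamma\in\mathbb{R^>_2}$. If $\lambda=0$ (i.e.~$y_0=0$) the corresponding fixed point is the trivial $(1,0)$, exactly as in the even case. Under $\lambda\neq 0$, dividing the $y'$-equation by the $x'$-equation in (\ref{V}) and using $y_0/x_0=\lambda$ produces
\begin{equation*}
\frac{\lambda}{\theta}=\frac{\sum_{i=1}^{2s+1}C_{2s+1}^{i}\alpha_{i+1}\lambda^{i}}{1+\sum_{i=1}^{2s+1}C_{2s+1}^{i}\alpha_{i}\lambda^{i}},
\end{equation*}
with the convention $\alpha_j=0$ for odd $j$ imposed by the parity of the $(x\pm\sqrt[2n+1]{2}\theta y)^{k+1+i}$ combinations. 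Cross-multiplying, dividing by $\lambda$, and collecting keeps only even powers of $\lambda$; the coefficient of $\lambda^{2i}$ comes out to $C_{2s+1}^{2i}\alpha_{2i}-C_{2s+1}^{2i+1}\alpha_{2i+2}\theta$, which factors as $\beta_{2i+1}(\theta_{2i+1}-\theta)$ under the definitions in the paragraph preceding the lemma.

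The step I expect to be the main obstacle is the bookkeeping of the top-degree term. In the even-$k$ lemma the leading coefficient is the bare $\alpha_{2s}$ because the numerator sum terminates one index below the denominator; here, for $k=2s+1$, both sums reach $i=2s+1$, so an extra $-C_{2s+1}^{2s+1}\alpha_{2s+2}\theta$ contributes to the coefficient of $\lambda^{2s}$, which then assembles into $\beta_{2s+1}(\theta_{2s+1}-\theta)$. I would verify this explicitly by isolating the $i=2s,\,2s+1$ terms in the $x'$- and $y'$-formulas of (\ref{V}) and checking that the resulting leading ratio matches $\theta_{2s+1}=C_{2s+1}^{2s}\alpha_{2s}/(C_{2s+1}^{2s+1}\alpha_{2s+2})$ from the list above the statement; the remaining coefficients then follow by the same mechanical argument as in the even case, yielding $Q(\lambda)=0$.
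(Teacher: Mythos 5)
Your proposal is correct and follows exactly the route the paper intends: the paper's own justification for this lemma is literally ``Analogously'' to the even case, i.e.\ divide the two components of (\ref{V}), drop the odd-indexed $\alpha$'s by parity, clear denominators, and read off the coefficients $C_{2s+1}^{2i}\alpha_{2i}-C_{2s+1}^{2i+1}\alpha_{2i+2}\theta=\beta_{2i+1}(\theta_{2i+1}-\theta)$. Your explicit treatment of the top-degree term (the extra $-C_{2s+1}^{2s+1}\alpha_{2s+2}\theta$ that turns the bare $\alpha_{2s}\lambda^{2s}$ of the even case into $\beta_{2s+1}(\theta_{2s+1}-\theta)\lambda^{2s}$) is precisely the one point where the odd case differs, and you handle it correctly.
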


\begin{pro}\label{pr1} Let $k=2s,\ s\in \textbf{N}.$\\
a) If $\theta\leq \theta_{1}$, then there is no non-trivial solution of (\ref{p});\\
b) If $\theta>\theta_{1}$, then there is exactly two (non-trivial) solutions of (\ref{p}). These solutions are opposing.
\end{pro}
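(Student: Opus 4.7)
The plan is to substitute $\mu = \lambda^2$ and reduce the problem to counting positive real roots of
\begin{equation*}
\widetilde P(\mu) := \beta_1(\theta_1-\theta) + \beta_3(\theta_3-\theta)\mu + \cdots + \beta_{2s-1}(\theta_{2s-1}-\theta)\mu^{s-1} + \alpha_{2s}\mu^s.
\end{equation*}
Because $P(\lambda)$ is an even polynomial, its non-trivial real roots come in opposite pairs $\lambda = \pm\sqrt{\mu_0}$, one pair for each positive root $\mu_0$ of $\widetilde P$. Thus the ``opposing'' clause in (b) is automatic, and the statement reduces to proving that $\widetilde P$ has no positive root when $\theta\leq\theta_1$ and exactly one positive root when $\theta>\theta_1$.

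For part (a), assume $\theta\leq\theta_1$. By the monotonicity (\ref{t}) we also have $\theta\leq\theta_{2i-1}$ for every $i\in\{1,\ldots,s\}$, so each coefficient $\beta_{2i-1}(\theta_{2i-1}-\theta)$ is non-negative. Combined with $\alpha_{2s}>0$ this forces $\widetilde P(\mu)>0$ for all $\mu>0$, hence $P$ admits no non-trivial root.

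For part (b), assume $\theta>\theta_1$. The constant coefficient $\beta_1(\theta_1-\theta)$ is strictly negative while the leading coefficient $\alpha_{2s}$ is strictly positive, so by continuity $\widetilde P$ has at least one positive real root. For uniqueness I would invoke Descartes' rule of signs. The crucial point, coming again from the strict chain (\ref{t}), is that once some coefficient $\beta_{2j-1}(\theta_{2j-1}-\theta)$ becomes non-negative, every later coefficient $\beta_{2i-1}(\theta_{2i-1}-\theta)$ with $i>j$ is strictly positive, because $\theta<\theta_{2j-1}<\theta_{2i-1}$. Consequently the sequence of nonzero coefficients of $\widetilde P$ starts out strictly negative and eventually becomes strictly positive, exhibiting exactly one sign change. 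Descartes' rule then forces $\widetilde P$ to have precisely one positive real root $\mu_0$, yielding the two opposite non-trivial roots $\pm\sqrt{\mu_0}$ of $P$.

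The main technical step is this sign-count via Descartes' rule; the monotonicity established in the preceding remark does all the substantive work, so no serious obstacle is expected. The only minor point to watch is the case when $\theta$ equals one of the intermediate $\theta_{2j-1}$, making a coefficient vanish, but this is harmless since Descartes' rule applies to the sign sequence of nonzero coefficients.
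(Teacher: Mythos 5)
Your argument is correct and is essentially the paper's own proof: the paper also counts sign changes of the coefficients (Descartes' rule), uses $P(0)<0$ together with $P(\lambda)\to+\infty$ to get existence, and invokes evenness of $P$ for the opposite root; your substitution $\mu=\lambda^2$ and your explicit justification of the single sign change via the chain $\theta_1<\theta_3<\cdots<\theta_{2s-1}$ only make explicit what the paper asserts without detail. No substantive difference.
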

\begin{proof} Proof Case a) of the Proposition is clearly. \\
b) Number of sign changes of coefficients of $P(\lambda)$ is equal
to 1. Then $P(\lambda)$ has at most one positive solution. The
second hand side we have $P(0)<0$ and
$\lim_{\lambda\rightarrow\infty}P(\lambda)=+\infty$. Then by
Roll's theorem $P(\lambda)$ has at least one positive solution.
Thus, there exist $\lambda^{\ast}>0$ such that
$P(\lambda^{\ast})=0$. Since $P(\lambda)$ is an even function
there is only one negative solution, i.e., $-\lambda^{\ast}$.\\
\end{proof}

\begin{pro}\label{pr2} Let $k=2s+1,\ s\in \textbf{N}.$\\
a) If $\theta\leq \theta_{1}$, then there is no non-trivial solution of (\ref{p1});\\
b) If $\theta>\theta_{1}$, then there is exactly two (non-trivial)
solutions of (\ref{p1}). These solutions are opposing.
\end{pro}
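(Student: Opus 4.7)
The approach mirrors the proof of Proposition \ref{pr1}, with one essential adjustment: the leading coefficient of $Q(\lambda)$ is now $\beta_{2s+1}(\theta_{2s+1}-\theta)$, which itself depends on $\theta$, so its sign must be tracked throughout. The admissible range $\theta<4^{1/(2n+1)}$ forces $\theta<\theta_{2s+1}$ automatically, since
$$\theta_{2s+1}=\frac{(2s+1)(4s+3)}{2n+2s+1}\cdot 4^{1/(2n+1)}>4^{1/(2n+1)}.$$
I also need the extension of the chain (\ref{t}) to $\theta_1<\theta_3<\ldots<\theta_{2s+1}$; this follows by the same monotonicity calculation as in the remark preceding (\ref{t}), applied to the rational function of $x=2i+1$ that parametrises $\theta_{2i+1}$ in the odd case.

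For part (a), when $\theta\leq\theta_1$, the extended chain makes every coefficient $\beta_{2i+1}(\theta_{2i+1}-\theta)$ non-negative and the leading one strictly positive. Since $Q(\lambda)$ contains only even powers of $\lambda$ with non-negative coefficients, this gives $Q(\lambda)>0$ for every $\lambda\neq 0$, so there is no non-trivial root.

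For part (b), when $\theta>\theta_1$, I combine two observations. First, $Q(0)=\beta_1(\theta_1-\theta)<0$ while the leading coefficient $\beta_{2s+1}(\theta_{2s+1}-\theta)$ is strictly positive, so $Q(\lambda)\to+\infty$ as $\lambda\to+\infty$ and by the intermediate value theorem there is at least one positive root. Second, let $j\in\{1,\ldots,s\}$ be the unique index with $\theta_{2j-1}<\theta\leq\theta_{2j+1}$; then the coefficients of $Q$, read in order of increasing degree, have the sign pattern $-,\ldots,-,+,\ldots,+$ (ignoring a possible zero at position $j+1$) with exactly one sign change, so Descartes' rule of signs yields at most one positive root. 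Combining the two, there is exactly one positive root $\lambda^{\ast}$, and since $Q$ is an even polynomial $-\lambda^{\ast}$ is the only other non-trivial root, giving the announced pair of opposing solutions.

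The main (and only mild) obstacle is the verification of the chain extension $\theta_{2s-1}<\theta_{2s+1}$ and the positivity of the leading coefficient on the admissible interval; both reduce to elementary algebraic estimates, after which the structural argument becomes essentially identical to the one used in Proposition \ref{pr1}.
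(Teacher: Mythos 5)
You have correctly identified the one point where the odd case genuinely differs from Proposition \ref{pr1} --- the leading coefficient of $Q$ is $\beta_{2s+1}(\theta_{2s+1}-\theta)$ rather than the $\theta$-independent $\alpha_{2s}$ --- and the rest of your argument (extension of the chain (\ref{t}), Descartes' rule giving at most one positive root, intermediate value theorem giving at least one, evenness giving the opposing pair) is exactly the intended adaptation; the paper itself only says the proof is ``similar to'' that of Proposition \ref{pr1}. The problem is your disposal of that one new difficulty. You assert that $\theta<4^{1/(2n+1)}$ automatically forces $\theta<\theta_{2s+1}$ because $\theta_{2s+1}=\frac{(2s+1)(4s+3)}{2n+2s+1}\,4^{1/(2n+1)}>4^{1/(2n+1)}$. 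That inequality is equivalent to $(2s+1)(4s+3)>2n+2s+1$, i.e.\ to $n<(2s+1)^2$, and $n$ is a free parameter of the model, independent of $s$. For $n\geq(2s+1)^2$ (e.g.\ $k=3$, so $s=1$, and $n=10$, where $\theta_3=\tfrac{21}{23}\,4^{1/21}<4^{1/21}$) the interval $\theta_{2s+1}\leq\theta<4^{1/(2n+1)}$ is non-empty. Indeed the paper's own Proposition \ref{th1}(a) and Theorem 3.10(a) explicitly single out the regime $\theta_{2s+1}\leq\theta<4^{1/(2n+1)}$ as a uniqueness regime, which would be vacuous if your claim were true.

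This is not a cosmetic omission: in that regime the chain $\theta_1<\theta_3<\cdots<\theta_{2s+1}\leq\theta$ makes \emph{every} coefficient of $Q$ non-positive, with the constant term $\beta_1(\theta_1-\theta)$ strictly negative, so $Q(\lambda)<0$ for all real $\lambda$ and (\ref{p1}) has no real non-trivial root at all. Hence your proof of part (b) collapses there (there is no sign change and no root to find), and in fact the statement of part (b) itself cannot hold for such $\theta$ with the definitions as printed. You need either to restrict part (b) to $\theta_1<\theta<\theta_{2s+1}$ (which is what the subsequent Proposition \ref{th1} effectively does), or to impose $n<(2s+1)^2$, or to argue from a corrected formula for $\theta_{2s+1}$; as written, the step ``the leading coefficient is strictly positive on the whole admissible range'' is false and must be replaced, not merely asserted.
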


\begin{proof} Proof of Proposition \ref{pr2} is similar to proof
of Proposition \ref{pr1}
\end{proof}

\begin{pro}\label{th} Let $k=2s,\ s\in \textbf{N}.$\\
a) Let $-4^{\frac{1}{2n+1}}<\theta\leq \theta_{1}$. Then
(\ref{o}) has only one
positive fixed point: $f(t)=1$.\\
b)Let
$$\frac{\sum_{i=1}^{s}2^{\frac{2i-2}{2n+1}}\beta_{2i-1}\theta_{2i-1}+\alpha_{2s}2^{\frac{2s}{2n+1}}}
{\sum_{i=1}^{s}2^{\frac{2i-2}{2n+1}}\beta_{2i-1}}\leq \theta\leq
4^{\frac{1}{2n+1}}.$$ Then (\ref{o}) has exactly two positive
fixed points: $f_{1}(t)=1$,
$f_{2}(t)=\bar{C}(1+\lambda^{\ast}t^{\frac{1}{2n+1}})$, where $\lambda^{\ast}$ is a positive solution (\ref{p}).\\
 c)Let
$$\theta_{1}<\theta<
\frac{\sum_{i=1}^{s}2^{\frac{2i-2}{2n+1}}\beta_{2i-1}\theta_{2i-1}+\alpha_{2s}2^{\frac{2s}{2n+1}}}
{\sum_{i=1}^{s}2^{\frac{2i-2}{2n+1}}\beta_{2i-1}}.$$ Then
(\ref{o}) has exactly three positive fixed points: $f_{1}(t)=1$,
$f_{2}(t)=\bar{C}(1+\lambda^{\ast}t^{\frac{1}{2n+1}})$,
$f_{3}(t)=\bar{C}(1-\lambda^{\ast}t^{\frac{1}{2n+1}})$ , where
$\lambda^{\ast}$ is a positive solution (\ref{p}).
\end{pro}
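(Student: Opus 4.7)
The plan is first to enumerate all continuous fixed points of $H_k$ by combining the functional ansatz of Lemma 3.2 with the root count for $P(\lambda)=0$ obtained from Lemma 3.4 and Proposition 3.6, and then to test each resulting candidate for membership in $C^+[0,1]$. The trivial solution $\lambda=0$ noted in the proof of Lemma 3.4 lifts to the fixed point $(1,0)$ of $V_k$, i.e., to the constant function $f_1\equiv 1$, which is always in $C^+[0,1]$. Together with Proposition 3.6(a), this immediately settles case (a): for $\theta\le\theta_1$, $P$ has no other real root, so $f_1$ is the unique positive fixed point.

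For $\theta>\theta_1$, Proposition 3.6(b) furnishes exactly two further real roots $\pm\lambda^{*}$. Substituting each into the ansatz of Lemma 3.2 yields two candidate fixed points $f_2,f_3$ sharing a common normalizing constant $\bar{C}$, which I would compute from the first-coordinate equation of $V_k$ (equivalently, from $C_1=\int_0^1\varphi^k\,du$). Because the kernel $K$ satisfies $K(t,u)=K(1-t,1-u)$, the two candidates are related by the reflection $t\mapsto 1-t$. In the parameterization stated in the proposition the asymmetry is that $f_2=\bar{C}(1+\lambda^{*}t^{1/(2n+1)})$ is automatically non-negative (since $\bar{C},\lambda^{*}>0$ and $t^{1/(2n+1)}\ge 0$ on $[0,1]$), while $f_3=\bar{C}(1-\lambda^{*}t^{1/(2n+1)})$ is non-negative only when $\lambda^{*}$ is small enough that $f_3(1)\ge 0$ holds.

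The positivity test thus reduces to a single scalar inequality $\lambda^{*}\le\lambda_{\mathrm{crit}}$, with $\lambda_{\mathrm{crit}}$ depending only on $n$. Substituting $\lambda^{*}=\lambda_{\mathrm{crit}}$ into $P(\lambda^{*})=0$ and solving the resulting linear relation for $\theta$ yields exactly the closed-form threshold $\theta^{*}$ displayed in (b) and (c). Combined with the monotonicity of $\theta\mapsto\lambda^{*}(\theta)$ on $(\theta_1,\infty)$---a consequence of the single-sign-change structure of $P$ already exploited in the proof of Proposition 3.6---one obtains the dichotomy: for $\theta_1<\theta<\theta^{*}$ both $f_2$ and $f_3$ lie in $C^+[0,1]$, giving the three positive fixed points of case (c); for $\theta^{*}\le\theta\le 4^{1/(2n+1)}$ only $f_2$ survives, leaving exactly two as in case (b).

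The main obstacle I anticipate is the positivity bookkeeping: pinning down $\bar{C}$ in closed form, verifying that it is positive throughout $\theta>\theta_1$, and converting the elementary bound $\lambda^{*}\le\lambda_{\mathrm{crit}}$ into the exact closed form of $\theta^{*}$ by a tedious algebraic substitution. Once that is done, the remainder of the argument is essentially an organizational rewriting of facts already available in Lemma 3.2, Lemma 3.4, and Proposition 3.6.
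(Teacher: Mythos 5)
Your proposal follows essentially the same route as the paper's proof: enumerate the candidate fixed points $f_1\equiv 1$ and $\bar{C}\bigl(1\pm\lambda^{*}t^{\frac{1}{2n+1}}\bigr)$ via Lemma 3.2 together with the root analysis of $P$ from Lemma 3.4 and Proposition 3.6, and then decide how many survive the positivity requirement by comparing $\lambda^{*}$ with the critical value $2^{\frac{1}{2n+1}}$, a comparison that becomes the stated $\theta$-threshold because the coefficients of $P$ are affine in $\theta$. The only difference is organizational: the paper reads off the dichotomy directly from the sign of $P\bigl(2^{\frac{1}{2n+1}}\bigr)$ (no monotonicity of $\lambda^{*}(\theta)$ needed), whereas you solve $P(\lambda_{\mathrm{crit}})=0$ for $\theta$ and then invoke that monotonicity — an equivalent bookkeeping of the same computation.
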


\begin{proof} We'll prove that case b (case a and c are similarly).
From $$\theta\geq
\frac{\sum_{i=1}^{s}2^{\frac{2i-2}{2n+1}}\beta_{2i-1}\theta_{2i-1}+\alpha_{2s}2^{\frac{2s}{2n+1}}}
{\sum_{i=1}^{s}2^{\frac{2i-2}{2n+1}}\beta_{2i-1}}>\theta_{1}$$
 (\ref{p}) has exactly 3 solutions. They are
$\lambda_{1}=0$, $\lambda_{2}=\lambda^{\ast}$ and
$\lambda_{3}=-\lambda^{\ast}$.
 By definition of $f(t)$ we get following solutions: $f_{1}(t)=1$,
$f_{2}(t)=C_{1}(1+\lambda^{\ast}t^{\frac{1}{2n+1}})$ and
$f_{3}(t)=C_{1}(1-\lambda^{\ast}t^{\frac{1}{2n+1}})$. But it is
interesting for us to find positive solutions, that's why we need
positive solutions. It's easy to check that $f_{1}(t), \ f_{2}(t)$
are positive solutions. We must check the third solution. The
third solution $f_{3}(t)$ be a negative if and only if
$\lambda^{\ast}\geq 2^{\frac{1}{2n+1}}$. Namely, it's sufficient
to check that $P(2^{\frac{1}{2n+1}})<0$. The last inequality is
equivalent to
$$ \theta\geq
\frac{\sum_{i=1}^{s}2^{\frac{2i-2}{2n+1}}\beta_{2i-1}\theta_{2i-1}+\alpha_{2s}2^{\frac{2s}{2n+1}}}
{\sum_{i=1}^{s}2^{\frac{2i-2}{2n+1}}\beta_{2i-1}}.$$ \end{proof}
Thus we have proved the following
\begin{thm} Let $k=2s,\ s\in \textbf{N}.$

(a) If $-4^{\frac{1}{2n+1}}<\theta\leq \theta_{1}$, then for model (\ref{m}) on the Cayley tree of order $k$ there exists the unique translation-invariant Gibbs measure;

(b) If $$\frac{\sum_{i=1}^{s}2^{\frac{2i-2}{2n+1}}\beta_{2i-1}\theta_{2i-1}+\alpha_{2s}2^{\frac{2s}{2n+1}}}
{\sum_{i=1}^{s}2^{\frac{2i-2}{2n+1}}\beta_{2i-1}}\leq \theta\leq
4^{\frac{1}{2n+1}},$$ then for model (\ref{m}) on the Cayley tree of order $k$ there are exactly two translation-invariant Gibbs measures;

(c) If $$\theta_{1}<\theta<
\frac{\sum_{i=1}^{s}2^{\frac{2i-2}{2n+1}}\beta_{2i-1}\theta_{2i-1}+\alpha_{2s}2^{\frac{2s}{2n+1}}}
{\sum_{i=1}^{s}2^{\frac{2i-2}{2n+1}}\beta_{2i-1}},$$ then for model (\ref{m}) on the Cayley tree of order $k$ there are exactly three translation-invariant Gibbs measures.
\end{thm}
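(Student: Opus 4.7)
The plan is to reduce the theorem to Proposition \ref{th} via the known dictionary between translation-invariant splitting Gibbs measures and positive fixed points of the Hammerstein operator $H_k$. Concretely, a translation-invariant Gibbs measure corresponds, through Proposition \ref{p1}, to a function $f(t,x) = f(t)$ that is $x$-independent and solves (\ref{e5}); when $|S(x)|$ equals $k$ (as it does for every $x \ne x^0$) this collapses to the integral equation $(H_k f)(t) = f(t)$ with the kernel $K(t,u)$ from (\ref{k}) (up to the constant normalization absorbed in $h_{0,x}$). This equivalence is already invoked in the excerpt and attributed to \cite{ehr2013}, so I would only recall it briefly and not reprove it.

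Granted that equivalence, the three cases of the theorem become a direct transcription of the three cases of Proposition \ref{th}. For (a), in the regime $-4^{1/(2n+1)} < \theta \le \theta_1$ Proposition \ref{th}(a) gives the single positive fixed point $f(t) \equiv 1$, hence a unique translation-invariant Gibbs measure. For (b), the regime where Proposition \ref{th}(b) holds, the theorem yields exactly two positive fixed points $f_1 \equiv 1$ and $f_2(t) = \bar C(1 + \lambda^{\ast} t^{1/(2n+1)})$ (the candidate $f_3$ being eliminated by the sign condition $\lambda^{\ast} \ge 2^{1/(2n+1)}$ verified there), giving two measures. For (c), Proposition \ref{th}(c) supplies three positive fixed points $f_1, f_2, f_3$, hence three measures.

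The one thing worth spelling out carefully is that distinct positive fixed points produce distinct translation-invariant Gibbs measures: since the construction $\mu^{(n)}$ in (\ref{e2}) depends on $f$ through $h_{t,x} - h_{0,x} = \log f(t,x)$, two positive solutions $f, g$ with $f \not\equiv c g$ for any constant $c>0$ yield genuinely different measures, and the three fixed points $1,\ \bar C(1 + \lambda^{\ast} t^{1/(2n+1)}),\ \bar C(1 - \lambda^{\ast} t^{1/(2n+1)})$ are manifestly non-proportional for $\lambda^{\ast} > 0$.

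I do not anticipate a real obstacle here; the heart of the work is already carried out in Lemma on reducing fixed points to roots of $P(\lambda)$, in Proposition \ref{pr1} counting those roots, and in Proposition \ref{th} checking which roots produce positive functions. The proof of the theorem itself is essentially bookkeeping: translate Proposition \ref{th} from the language of positive fixed points of $H_k$ into the language of translation-invariant Gibbs measures and observe that the three parameter ranges are mutually exclusive and cover (together with the boundary point $\theta = \theta_1$, which is already absorbed into case (a)) the admissible interval $(-4^{1/(2n+1)}, 4^{1/(2n+1)})$.
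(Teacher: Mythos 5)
Your proposal matches the paper's own (implicit) argument: the paper derives the theorem immediately from Proposition \ref{th} via the stated correspondence between translation-invariant Gibbs measures and positive fixed points of the Hammerstein operator $H_k$, which is exactly the reduction you carry out. Your additional remark that non-proportional positive fixed points yield distinct measures is a small but welcome piece of care that the paper leaves unstated.
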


Similar to Proposition \ref{th}, we get the following

\begin{pro}\label{th1} Let $k=2s+1,\ s\in \textbf{N}.$\\
a) Let $-4^{\frac{1}{2n+1}}<\theta\leq \theta_{1}$,
$\theta_{2s+1}\leq\theta<4^{\frac{1}{2n+1}}$. Then (\ref{o}) has
only one
positive fixed point: $f(t)=1$.\\
b)Let
$$\frac{\sum_{i=1}^{s+1}2^{\frac{2i-2}{2n+1}}\beta_{2i-1}\theta_{2i-1}}
{\sum_{i=1}^{s+1}2^{\frac{2i-2}{2n+1}}\beta_{2i-1}}\leq
\theta<\theta_{2s+1}.$$ Then (\ref{o}) has exactly two positive
fixed points: $f_{1}(t)=1$,
$f_{2}(t)=\bar{C}(1+\lambda^{\ast}t^{\frac{1}{2n+1}})$, where $\lambda^{\ast}$ is a positive solution (\ref{p1}).\\
 c)Let
$$\theta_{1}<\theta<
\frac{\sum_{i=1}^{s+1}2^{\frac{2i-2}{2n+1}}\beta_{2i-1}\theta_{2i-1}}
{\sum_{i=1}^{s+1}2^{\frac{2i-2}{2n+1}}\beta_{2i-1}}.$$ Then
(\ref{o}) has exactly three positive fixed points: $f_{1}(t)=1$,
$f_{2}(t)=\bar{C}(1+\lambda^{\ast}t^{\frac{1}{2n+1}})$,
$f_{3}(t)=\bar{C}(1-\lambda^{\ast}t^{\frac{1}{2n+1}})$ , where
$\lambda^{\ast}$ is a positive solution (\ref{p1}).
\end{pro}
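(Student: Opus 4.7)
The plan is to carry over the proof of Proposition \ref{th} to the odd case $k=2s+1$, replacing the polynomial $P$ in (\ref{p}) by $Q$ in (\ref{p1}) and using Lemma 3.4 together with Proposition \ref{pr2}. By Lemma 3.4 every positive fixed point $f$ of $H_{k}$ corresponds to a real root $\lambda$ of $Q$: $\lambda=0$ gives $f\equiv 1$, while a nonzero root $\pm\lambda^{\ast}$ produces, in the notation of the proof of Proposition \ref{th}, $f(t)=\bar C\bigl(1\pm\lambda^{\ast}t^{1/(2n+1)}\bigr)$. So the task reduces to enumerating the real roots of $Q$ and testing positivity of the resulting $f$.

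Part (a) splits into two subcases. For $-4^{1/(2n+1)}<\theta\le\theta_{1}$ I would quote Proposition \ref{pr2}(a), which gives no non-trivial root, leaving only $f\equiv 1$. For the genuinely new subcase $\theta_{2s+1}\le\theta<4^{1/(2n+1)}$ I would use the odd-case analogue of the monotonicity (\ref{t}), namely $\theta_{1}<\theta_{3}<\cdots<\theta_{2s+1}$ (a direct derivative check just as in the Remark), to conclude that every coefficient $\beta_{2i-1}(\theta_{2i-1}-\theta)$ of $Q$ is non-positive and that $Q(0)=\beta_{1}(\theta_{1}-\theta)<0$. Hence $Q(\lambda)<0$ on $[0,\infty)$, and because $Q$ is even there are no nonzero real roots at all; only $f\equiv 1$ survives.

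For parts (b) and (c) I fix $\theta_{1}<\theta<\theta_{2s+1}$, the range where the leading coefficient $\beta_{2s+1}(\theta_{2s+1}-\theta)$ of $Q$ is strictly positive. Proposition \ref{pr2}(b) then yields a unique positive root $\lambda^{\ast}$ with $-\lambda^{\ast}$ its negative counterpart, giving candidate positive fixed points $f_{1}\equiv 1$, $f_{2}(t)=\bar C(1+\lambda^{\ast}t^{1/(2n+1)})$ and $f_{3}(t)=\bar C(1-\lambda^{\ast}t^{1/(2n+1)})$. The first two are automatically positive, so only $f_{3}$ requires a check. Mirroring the even-case argument, $f_{3}>0$ on $[0,1]$ is equivalent to $\lambda^{\ast}<2^{1/(2n+1)}$, and since $Q$ passes from negative to positive at $\lambda^{\ast}$, this is in turn equivalent to $Q(2^{1/(2n+1)})>0$.

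The closing step is the direct evaluation
\[
Q(2^{1/(2n+1)})=\sum_{i=1}^{s+1}2^{(2i-2)/(2n+1)}\beta_{2i-1}(\theta_{2i-1}-\theta),
\]
which is strictly positive iff $\theta<\theta^{\ast}:=\dfrac{\sum_{i=1}^{s+1}2^{(2i-2)/(2n+1)}\beta_{2i-1}\theta_{2i-1}}{\sum_{i=1}^{s+1}2^{(2i-2)/(2n+1)}\beta_{2i-1}}$. This gives part (c) on $\theta_{1}<\theta<\theta^{\ast}$ (three positive fixed points) and part (b) on $\theta^{\ast}\le\theta<\theta_{2s+1}$ (two positive fixed points). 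I anticipate the only non-mechanical step to be the positivity threshold for $f_{3}$: once that is reproduced from the even-case argument, the rest is sign bookkeeping on the coefficients of $Q$ together with the extended monotonicity of the $\theta_{2i-1}$.
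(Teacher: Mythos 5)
Your proposal is correct, and it is essentially the argument the paper intends: the paper gives no separate proof of this proposition, saying only that it is obtained ``similar to Proposition \ref{th}'', and your parts (b), (c) reproduce that even-case scheme verbatim for $Q$ — reduction via Lemma 3.4 to the roots of (\ref{p1}), positivity of $f_3$ equivalent to $\lambda^{\ast}<2^{1/(2n+1)}$, and the evaluation $Q(2^{1/(2n+1)})=\sum_{i=1}^{s+1}2^{(2i-2)/(2n+1)}\beta_{2i-1}(\theta_{2i-1}-\theta)$ giving the threshold $\theta^{\ast}$. The one place where you genuinely depart from the paper's cited machinery is the new subcase $\theta_{2s+1}\le\theta<4^{1/(2n+1)}$ of part (a): instead of invoking Proposition \ref{pr2}(b) — which, as literally stated, would assert two nontrivial roots for every $\theta>\theta_1$ and is not correct once the leading coefficient $\beta_{2s+1}(\theta_{2s+1}-\theta)$ ceases to be positive — you argue directly that, by the extended monotonicity $\theta_1<\theta_3<\cdots<\theta_{2s+1}$, all coefficients of $Q$ are non-positive with $Q(0)<0$, so $Q$ has no nonzero real roots and only $f\equiv 1$ remains. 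This direct sign analysis is exactly what is needed there (it amounts to restricting Proposition \ref{pr2}(b) to $\theta_1<\theta<\theta_{2s+1}$, the range where $Q\to+\infty$), and it makes your write-up slightly more careful than the paper's; the rest is the same bookkeeping, including the boundary convention at $\theta=\theta^{\ast}$ where $f_3$ vanishes at an endpoint and is discarded.
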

Thus we obtain the following
\begin{thm} Let $k=2s+1,\ s\in \textbf{N}.$

(a) If $-4^{\frac{1}{2n+1}}<\theta\leq \theta_{1}$,
$\theta_{2s+1}\leq\theta<4^{\frac{1}{2n+1}}$, then for model (\ref{m}) on the Cayley tree of order $k$ there exists the unique translation-invariant Gibbs measure;

(b) If $$\frac{\sum_{i=1}^{s+1}2^{\frac{2i-2}{2n+1}}\beta_{2i-1}\theta_{2i-1}}
{\sum_{i=1}^{s+1}2^{\frac{2i-2}{2n+1}}\beta_{2i-1}}\leq
\theta<\theta_{2s+1},$$  then for model (\ref{m}) on the Cayley tree of order $k$ there are exactly two translation-invariant Gibbs measures;

(c) If $$\theta_{1}<\theta<
\frac{\sum_{i=1}^{s+1}2^{\frac{2i-2}{2n+1}}\beta_{2i-1}\theta_{2i-1}}
{\sum_{i=1}^{s+1}2^{\frac{2i-2}{2n+1}}\beta_{2i-1}},$$ then for model (\ref{m}) on the Cayley tree of order $k$ there are exactly three translation-invariant Gibbs measures.
\end{thm}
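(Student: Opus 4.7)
The proof plan is essentially a translation step: the final theorem is the ``Gibbs measure'' reformulation of the fixed-point count in Proposition \ref{th1}. The first thing I would do is recall the chain of correspondences established in the preceding sections. By the discussion before Lemma~\ref{p} (and the result cited from \cite{ehr2013}), the translation-invariant splitting Gibbs measures of the model (\ref{m}) are in bijection with positive continuous fixed points of the Hammerstein operator $H_k$ in (\ref{o}), with kernel (\ref{k}). By the Lemma preceding Lemma~\ref{p}, such a fixed point has the form $\varphi(t) = C_1 + C_2 \theta\sqrt[2n+1]{4(t-1/2)}$, where $(C_1,C_2)$ is a fixed point of $V_k$. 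Combining with Lemma~\ref{p1} for $k=2s+1$, each nontrivial fixed point of $V_k$ is determined by a real root $\lambda = y/x$ of $Q(\lambda)=0$, and the trivial solution $\lambda=0$ corresponds to the constant function $f_1(t)\equiv 1$.

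Once this dictionary is in place, the theorem reduces to a direct quotation of Proposition~\ref{th1}. Case (a) is the regime in which Proposition~\ref{th1}(a) gives a unique positive fixed point $f(t)\equiv 1$, yielding the unique translation-invariant Gibbs measure. Case (b) corresponds to Proposition~\ref{th1}(b), where in addition to $f_1$ the solution $f_2(t) = \bar C(1+\lambda^*t^{1/(2n+1)})$ is positive while $f_3$ fails positivity; this gives exactly two measures. Case (c) corresponds to Proposition~\ref{th1}(c), where both $f_2$ and $f_3$ are positive in addition to $f_1$, giving three measures. Thus, having Proposition~\ref{th1}, the theorem follows by invoking Proposition~\ref{p1} and the Hammerstein correspondence.

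The only thing one needs to verify separately is that the intervals in (a), (b), (c) are consistent with one another and cover the admissible parameter range $(-4^{1/(2n+1)},4^{1/(2n+1)})$. This uses the monotonicity of $\theta_x$ established in the Remark following (\ref{t}), which gives in particular $\theta_1<\theta_{2s+1}$; together with the fact that the weighted mean
\[
\frac{\sum_{i=1}^{s+1}2^{(2i-2)/(2n+1)}\beta_{2i-1}\theta_{2i-1}}{\sum_{i=1}^{s+1}2^{(2i-2)/(2n+1)}\beta_{2i-1}}
\]
is a convex combination of the numbers $\theta_{2i-1}$, this mean lies in the interval $[\theta_1,\theta_{2s+1}]$, so the threshold separating (b) from (c) is correctly located inside the regime where $Q$ has nontrivial roots.

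The main obstacle, as in the proof of Proposition~\ref{th}, is the positivity check that separates (b) from (c): one needs the sharp equivalence
\[
\lambda^* \geq 2^{1/(2n+1)} \quad \Longleftrightarrow \quad Q\bigl(2^{1/(2n+1)}\bigr)\leq 0,
\]
and then to reinterpret the last inequality as precisely the threshold condition written in (b)--(c). This is carried out exactly as in Proposition~\ref{th} by substituting $\lambda=2^{1/(2n+1)}$ into the explicit expression for $Q$ and collecting terms; no new idea is required, only the bookkeeping with the coefficients $\beta_{2i-1},\theta_{2i-1}$ for odd $k$. Once this inequality is verified, the theorem is complete.
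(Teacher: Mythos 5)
Your proposal is correct and follows the same route as the paper: the theorem is stated there as an immediate consequence of Proposition~\ref{th1} via the correspondence (cited from \cite{ehr2013}) between translation-invariant Gibbs measures and positive fixed points of the Hammerstein operator $H_k$, with the positivity threshold $Q(2^{1/(2n+1)})\leq 0$ handled inside the proposition exactly as you describe. The extra consistency check you add (that the weighted mean of the $\theta_{2i-1}$ lies in $[\theta_1,\theta_{2s+1}]$) is a reasonable supplement the paper leaves implicit.
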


\begin{rk} a) For the case $k=2$ Theorem 3.8 coincides with Theorem 4.2 in \cite{t};\\
b) For the case $k=3$ Theorem 3.10 coincides with Theorem 5.2 in \cite{t}.
\end{rk}

\end{document}